\newtheorem{prop}{Proposition}
\newtheorem{lem}[prop]{Lemma}
\newtheorem{thm}[prop]{Theorem}
\theoremstyle{definition}
\newtheorem{defi}[prop]{Definition}
\newtheorem{ex}[prop]{Example}
\newtheorem{rem}[prop]{Remark}
\newcommand{\F}{\mathbb{F}}
\newcommand{\rk}{\mathrm{rank}}
\newcommand{\rs}{\mathrm{rs}}
\newcommand{\qdeg}{\mathrm{qdeg}}
\newcommand{\Lp}{\mathcal{L}_q(x,q^m)}
\title{Iterative List-Decoding of Gabidulin Codes via Gr\"obner Based Interpolation}
\author{\IEEEauthorblockN{Margreta Kuijper and Anna-Lena Trautmann\thanks{ALT is also with the Department of Electrical and Computer Systems Engineering, Monash University. She was supported by Swiss National Science Foundation Fellowship no. 147304.}} \IEEEauthorblockA{Department of Electrical and Electronic Engineering, University of Melbourne, Australia.}}
\begin{document}

\maketitle

\begin{abstract}
We show how Gabidulin codes can be list decoded by using an iterative parametrization approach. 
For a given received word, our decoding algorithm processes its entries one by one, constructing four polynomials at each step. This then yields a parametrization of interpolating solutions for the data so far. From the final result a list of all codewords that are closest to the received word with respect to the rank metric is obtained.
\end{abstract}

\section{Introduction}

Over the last decade there has been increased interest in Gabidulin codes, mainly because of their relevance to network coding~\cite{ko08,si08j}. Gabidulin codes are optimal rank-metric nonbinary codes over a field $\F_q^m$ (where $q$ is a prime power). They were first derived by Gabidulin in \cite{ga85a} and independently by Delsarte in \cite{de78}. 
These codes can be seen as the $q$-analog of Reed-Solomon codes, using $q$-linearized polynomials instead of arbitrary polynomials. They are optimal in the sense that they are not only MDS codes with respect to the Hamming metric, but also achieve the Singleton bound with respect to the rank metric and are thus MRD codes. They are not only of interest in network coding but also in space-time coding \cite{lu03}, crisscoss error correction \cite{ro91} and distributed storage \cite{si12}. 

The decoding of Gabidulin codes has obtained a fair amount of attention in the literature, starting with work on decoding inside the unique decoding radius in~\cite{ga85a,ga92} and more recently~\cite{lo06,ri04p,si10,si11,si09p}.  Decoding beyond the unique decoding radius was investigated in e.g.\ \cite{lo06p,ko08, ma12,wa13p,xi11}. Related work on list-decoding of lifted Gabidulin codes can be found in \cite{tr13p}.

Using the close resemblance between Reed-Solomon codes and Gabidulin codes, the paper~\cite{lo06} translates Gabidulin decoding into a set of polynomial interpolation conditions. Essentially, this setup is also used in the papers~\cite{ko08,xi11} that present iterative algorithms that perform Gabidulin list decoding with a list size of 1. In this paper we present an iterative algorithm that bears similarity to the ones in~\cite{lo06,ko08,xi11} but yields {\em all} closest codewords rather than just one. 
The latter is due to our parametrization approach.

The paper is structured as follows. In the next section we present several preliminaries on $q$-linearized polynomials, Gabidulin codes, the rank metric and we recall the polynomial interpolation conditions from~\cite{lo06}. We also detail the iterative construction of the $q$-annihilator polynomial and the $q$-Lagrange polynomial. Section \ref{sec:prelim} closes with several preliminaries on Gr\"obner bases. In Section \ref{sec:interpolation} we reformulate the Gabidulin list decoding requirements in terms of a module represented by four $q$-linearized polynomials. 
In Section \ref{sec:algorithm} we present the algorithm and our main result which details how the algorithm yields a list of all closest message polynomials. We conclude this paper in Section \ref{sec:conclusion}.


\section{Preliminaries}\label{sec:prelim}

\subsection{$q$-linearized polynomials}

Let $q$ be a prime power and let $\F_q$ denote the finite field with $q$ elements. It is well-known that there always exists a primitive element $\alpha$ of the extension field $\F_{q^m}$, such that $\F_{q^m}\cong \F_q[\alpha] $. Moreover, $\F_{q^m}$  is isomorphic (as a vector space) to the vector space $\F_q^m$. 
One then easily gets the isomorphic description of matrices over the base field $\F_q$ as vectors over the extension field, i.e.\ $\F_q^{m\times n}\cong \F_{q^m}^n$. Since we will work with matrices over different underlying fields we denote the rank of a matrix $X$ over $\F_q$ by $\rk_q(X)$.

For some vector $(v_1,\dots, v_n) \in \F_{q^m}^n$ we denote the $k \times n$ \emph{Moore matrix} by
\[M_k(v_1,\dots, v_n) := \left( \begin{array}{cccc}  v_1 & v_2 &\dots &v_n \\ v_1^{[1]} & v_2^{[1]} &\dots &v_n^{[1]} \\ \vdots \\  v_1^{[k-1]} & v_2^{[k-1]} &\dots &v_n^{[k-1]} \end{array}\right)   ,\]
where $[i]:= q^i$. A \emph{$q$-linearized polynomial} over $\F_{q^m}$ is defined to be of the form
\[f(x) = \sum_{i=0}^{n} a_i x^{[i]}   \quad, \quad a_i \in\F_{q^m} , \]
where $n$ is called the \emph{$q$-degree} of $f(x)$, assuming that $a_n\neq 0$, denoted by $\qdeg (f)$. This class of polynomials was first studied by Ore in \cite{or33}. 
One can easily check that $f(x_1 + x_2)= f(x_1)+f(x_2)$ and $f(\lambda x_1) = \lambda f(x_1)$ for any $x_1,x_2 \in \F_{q^m}$ and $\lambda \in \F_q$, hence the name \emph{linearized}. The set of all $q$-linearized polynomials over $\F_{q^m}$ is denoted by $\Lp$. This set forms a non-commutative ring with the normal addition $+$ and composition $\circ$ of polynomials. 
Because of the non-commutativity, products and quotients of elements of $\Lp$ have to be specified as being ``left" or `right" products or quotients. To not be mistaken with the standard division, we call the inverse of the composition \emph{symbolic division}. I.e.\ $f(x)$ is symbolically divisible by $g(x)$ with right quotient $m(x)$ if $$ g(x) \circ m(x) = g(m(x)) = f(x).$$
Efficient algorithms for all these operations (left and right symbolic multiplication and division) exist and can be found e.g.\ in \cite{ko08}.

\begin{lem}[cf.\ \cite{li97b} Thm. 3.50]\label{lem:rootspace}
Let $f(x) \in \Lp$ and $\F_{q^s}$ be the smallest extension field of $\F_{q^m}$ that contains all roots of $f(x)$. Then the set of all roots of $f(x)$ forms a $\F_q$-linear vector space in $\F_{q^s}$.
\end{lem}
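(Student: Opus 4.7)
The plan is to verify directly that the zero set of $f$ satisfies the three $\F_q$-subspace axioms inside $\F_{q^s}$. Let $V := \{x \in \F_{q^s} : f(x) = 0\}$. Since every monomial in $f(x) = \sum_{i=0}^{n} a_i x^{[i]}$ has positive exponent (the $i=0$ term being $a_0 x$), we have $f(0) = 0$, so $0 \in V$.

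Next I would extend the two linearity identities, stated in the preliminary text only for arguments in $\F_{q^m}$, to arguments in the larger field $\F_{q^s}$. Both properties follow from the Frobenius $\varphi : x \mapsto x^q$ on $\F_{q^s}$: since $\F_{q^s}$ has the same characteristic as $\F_q$, the Freshman's dream gives $(x_1 + x_2)^{[i]} = x_1^{[i]} + x_2^{[i]}$, and since $\lambda \in \F_q$ satisfies $\lambda^{[i]} = \lambda$, one has $(\lambda x)^{[i]} = \lambda x^{[i]}$. Taking $\F_{q^m}$-linear combinations with the coefficients $a_i$ yields
\[
f(x_1 + x_2) = f(x_1) + f(x_2), \qquad f(\lambda x) = \lambda f(x)
\]
for all $x_1, x_2, x \in \F_{q^s}$ and all $\lambda \in \F_q$.

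With these identities in hand the remaining axioms are immediate: if $x_1, x_2 \in V$ then $f(x_1 + x_2) = 0 + 0 = 0$, so $x_1 + x_2 \in V$; and if $x \in V$ and $\lambda \in \F_q$ then $f(\lambda x) = \lambda \cdot 0 = 0$, so $\lambda x \in V$. Hence $V$ is an $\F_q$-linear subspace of $\F_{q^s}$.

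The only substantive step, and the main (still minor) obstacle, is the bookkeeping needed to promote the additivity and $\F_q$-homogeneity of $f$ from $\F_{q^m}$ to the extension $\F_{q^s}$ where the roots actually live; once that is recorded, the statement reduces to a three-line verification of the subspace axioms, and no sharper information about $V$ (such as a dimension bound) needs to be extracted.
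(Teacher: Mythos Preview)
Your argument is correct and is exactly the standard verification: extend additivity and $\F_q$-homogeneity of $x\mapsto x^{[i]}$ to the splitting field via Frobenius, then check the subspace axioms for the zero set. The paper does not give its own proof of this lemma at all---it merely cites \cite{li97b}, Theorem~3.50---so there is nothing further to compare; your write-up supplies precisely the routine details that the paper omits by reference.
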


\begin{lem}[\cite{li97b} Thm. 3.52]\label{lem:nullpoly}
Let $U$ be a $\F_q$-linear subspace of $\F_{q^m}$. Then 
\( \prod_{g \in U} (x-g)\)
is an element of $\Lp$.
\end{lem}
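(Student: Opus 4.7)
The plan is to prove the statement by induction on $k:=\dim_{\F_q} U$. The base case $k=0$ is trivial since $U=\{0\}$ gives $P(x):=\prod_{g\in U}(x-g)=x$, which lies in $\Lp$. So suppose the claim holds for all $\F_q$-subspaces of dimension $k-1$ and let $U$ have dimension $k$. I would pick any $\F_q$-hyperplane $U'\subset U$ and any $u\in U\setminus U'$, giving the coset decomposition $U=\bigsqcup_{c\in\F_q}(cu+U')$. By the inductive hypothesis, $Q(x):=\prod_{g'\in U'}(x-g')$ is $q$-linearized, i.e.\ $Q\in\Lp$.

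Using this decomposition, I would rewrite
\[
P(x)=\prod_{c\in\F_q}\prod_{g'\in U'}(x-cu-g')=\prod_{c\in\F_q}Q(x-cu).
\]
The crucial computational step is now to exploit $\F_q$-linearity of $Q$: since $c\in\F_q$, we have $Q(x-cu)=Q(x)-cQ(u)$. Setting $y:=Q(x)$ and $a:=Q(u)$ (where $a\neq 0$ because $u\notin U'$ means $u$ is not a root of $Q$), the product becomes
\[
\prod_{c\in\F_q}(y-ca)=y\prod_{c\in\F_q^{*}}(y-ca)=y\bigl(y^{q-1}-a^{q-1}\bigr)=y^{q}-a^{q-1}y,
\]
where I used the standard identity $\prod_{c\in\F_q^{*}}(z-c)=z^{q-1}-1$ applied to $z=y/a$. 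Therefore
\[
P(x)=Q(x)^{q}-Q(u)^{q-1}\,Q(x).
\]

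To finish, I would observe that $\Lp$ is closed under the two operations appearing on the right-hand side. Writing $Q(x)=\sum_i b_i x^{[i]}$, Frobenius gives $Q(x)^{q}=\sum_i b_i^{q} x^{[i+1]}$, which is again $q$-linearized; and multiplying a $q$-linearized polynomial by the scalar $Q(u)^{q-1}\in\F_{q^m}$ keeps it in $\Lp$. Hence $P(x)\in\Lp$, completing the induction. The only real obstacle is the identification of the right hyperplane-plus-coset decomposition and the trick of collapsing the inner product via $\prod_{c\in\F_q^{*}}(z-c)=z^{q-1}-1$; everything else is routine verification that $\Lp$ is stable under Frobenius and $\F_{q^m}$-scalar multiplication.
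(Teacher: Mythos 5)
Your proof is correct. The paper itself cites this result from Lidl--Niederreiter (Thm.\ 3.52) without giving a proof, and your induction on $\dim_{\F_q} U$ via the coset decomposition and the identity $P(x)=Q(x)^q-Q(u)^{q-1}Q(x)$ is exactly the standard argument; indeed, that identity is precisely the recursion $\Pi_{i+1}(x)=\Pi_i(x)^q-\Pi_i(g_{i+1})^{q-1}\Pi_i(x)$ that the paper uses in Proposition~\ref{prop:Lagrec} to build the $q$-annihilator polynomial iteratively, so your argument matches the paper's own machinery.
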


Note that, if $g_1,\dots,g_n$ is a basis of $U$, one can rewrite 
$$ \prod_{g \in U} (x-g) = \lambda \det(M_{t+1}(g_1,\dots,g_n,x))$$ 
for some constant $\lambda\in\F_{q^m}$. We call this polynomial the \emph{$q$-annihilator polynomial of $U$}, denoted by $\Pi_{(g_1, g_2, \ldots , g_n)}(x)$. Clearly its $q$-degree equals $n$.

%
%
%
%

We also have a notion of $q$-Lagrange polynomial:
%
Let $\mathbf g=(g_1,\dots,g_n)$ and $\mathbf r=(r_1,\dots,r_n)$. Define the matrix $\mathfrak{D}_i(\mathbf g, x)$ as $  M_{n}(g_1,\dots,g_n,x)$ without the $i$-th column.
 We define the \emph{$q$-Lagrange polynomial} as  
\[\Lambda_{\mathbf g, \mathbf r}(x) := \sum_{i=1}^n (-1)^{n-i}  r_i \frac{\det(\mathfrak{D}_i(\mathbf g, x))}{\det (M_n(\mathbf g))} \quad \in \F_{q^m}[x] .\]
%
It can be easily verified that the above polynomial is $q$-linearized and that $\Lambda_{\mathbf g, \mathbf{r}}(g_i) = r_i $ for $i=1,\dots,n$.

Note that, although not under the same name, the previous two polynomials were also defined in e.g.\ \cite{wa13phd}.


In the following we will use matrix composition, which is defined analogously to matrix multiplication:
$$\left[\begin{array}{cc}  a(x) & b(x) \\ c(x) & d(x) \end{array}\right] \circ\left[\begin{array}{cc}  e(x) & f(x) \\ g(x) & h(x) \end{array}\right]:= $$ $$\left[\begin{array}{cc}  a(e(x)) + b(g(x)) & a(f(x))+ b(h(x)) \\ c(e(x))+ d(g(x)) & c(f(x))+ d(h(x)) \end{array}\right] .$$ 

We can recursively construct the $q$-annihilator and the $q$-Lagrange polynomial as follows.

\begin{prop}\label{prop:Lagrec}
Let $g_1,\dots,g_n\in\F_{q^m}$ be linearly independent and $r_1,\dots,r_n\in\F_{q^m}$. Define
$$\Pi_1(x):=  x^q-g_1^{q-1} x \quad , \quad\Lambda_{ 1}(x):=\frac{r_1}{g_1} x,$$
and for $i=1, \ldots , n-1$
$$ \left[ \begin{array}{cc} \Pi_{i+1}(x) \\ \Lambda_{i+1}(x)  \end{array}\right] :=  \left[ \begin{array}{cc} x^q- \Pi_i(g_{i+1})^{q-1}x & 0 \\ -  \frac{ \Lambda_{ {i}}(g_{i+1}) - r_{i+1}}{\Pi_{i} (g_{i+1})} x  & x  \end{array}\right] \circ  \left[ \begin{array}{cc} \Pi_{i}(x)  \\ \Lambda_{i}(x)   \end{array}\right]  .$$
Then for $i=1, \ldots , n$ we have $\Pi_i(x) = \Pi_{(g_1, g_2, \ldots , g_i)}(x)$ and $\Lambda_i(x) = \Lambda_{(g_1, g_2, \ldots , g_i),(r_1,\dots, r_{i})}(x)$.
\end{prop}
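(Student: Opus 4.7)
The plan is to proceed by induction on $i$. At each step I will verify that $\Pi_{i+1}$ satisfies the characterizing properties of the $q$-annihilator polynomial of $\langle g_1,\dots,g_{i+1}\rangle$ (monic, $q$-linearized, $q$-degree $i+1$, vanishing on all $g_j$), and that $\Lambda_{i+1}$ is the unique $q$-linearized polynomial of $q$-degree at most $i$ interpolating the pairs $(g_j,r_j)$ for $j\leq i+1$. Unpacking the matrix composition, the recursion reads $\Pi_{i+1}(x) = \Pi_i(x)^q - \Pi_i(g_{i+1})^{q-1}\Pi_i(x)$ and $\Lambda_{i+1}(x) = \Lambda_i(x) - \frac{\Lambda_i(g_{i+1})-r_{i+1}}{\Pi_i(g_{i+1})}\,\Pi_i(x)$. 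Both are $\F_{q^m}$-linear combinations of $q$-linearized polynomials in $x$, hence $q$-linearized themselves.

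For the base case I would observe that $\Pi_1(x)=x^q-g_1^{q-1}x$ factors as $\prod_{\lambda\in\F_q}(x-\lambda g_1)$, so by Lemma~\ref{lem:nullpoly} and the remark following it, $\Pi_1 = \Pi_{(g_1)}$; similarly $\Lambda_1(g_1)=r_1$ with $\qdeg(\Lambda_1)\leq 0$, matching the definition of $\Lambda_{(g_1),(r_1)}$ (whose $q$-degree is at most $0$ by the determinant formula).

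For the inductive step on $\Pi_{i+1}$, evaluation at $g_j$ with $j\leq i$ vanishes because $\Pi_i(g_j)=0$, while substituting $g_{i+1}$ gives $\Pi_i(g_{i+1})^q-\Pi_i(g_{i+1})^{q-1}\Pi_i(g_{i+1})=0$. Since $\Pi_i$ is monic of $q$-degree $i$ by induction, raising it to the $q$-th power contributes a monic $x^{[i+1]}$, so $\Pi_{i+1}$ is monic of $q$-degree $i+1$. By Lemma~\ref{lem:rootspace} its root set is an $\F_q$-vector space, and since the total number of roots is bounded by the degree $q^{i+1}$ while we have just exhibited $i+1$ linearly independent ones, the root space equals exactly $\langle g_1,\dots,g_{i+1}\rangle$; monicity together with Lemma~\ref{lem:nullpoly} then forces $\Pi_{i+1}=\Pi_{(g_1,\dots,g_{i+1})}$. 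For $\Lambda_{i+1}$, interpolation at $g_j$ with $j\leq i$ is inherited from $\Lambda_i$ since the correction term carries a factor $\Pi_i(g_j)=0$, and the multiplier in front of $\Pi_i(x)$ is engineered so that $\Lambda_{i+1}(g_{i+1}) = \Lambda_i(g_{i+1}) - (\Lambda_i(g_{i+1})-r_{i+1}) = r_{i+1}$. Its $q$-degree is at most $i$, and uniqueness of such an interpolant follows from the same root-space bound: a difference of two candidates would be $q$-linearized of $q$-degree $\leq i$ with $i+1$ linearly independent roots, which is impossible unless it is zero; hence $\Lambda_{i+1} = \Lambda_{(g_1,\dots,g_{i+1}),(r_1,\dots,r_{i+1})}$.

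The main technical point, and the only place where the hypotheses are used non-trivially, is showing that $\Pi_i(g_{i+1})\neq 0$ so that the division in the recursion is well-defined. This will follow because, by induction, the root space of $\Pi_i$ equals $\langle g_1,\dots,g_i\rangle$, and by the $\F_q$-linear independence of $g_1,\dots,g_{i+1}$ the element $g_{i+1}$ does not lie in this subspace. Once this is secured, the remainder of the argument is routine bookkeeping anchored by Lemmas~\ref{lem:rootspace} and~\ref{lem:nullpoly}.
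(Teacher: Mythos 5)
Your proof is correct and follows essentially the same inductive route as the paper: verify that $\Pi_{i+1}$ is monic, $q$-linearized of $q$-degree $i+1$ and vanishes on $g_1,\dots,g_{i+1}$, and that $\Lambda_{i+1}$ is a $q$-linearized interpolant of $q$-degree at most $i$, then conclude by the degree-versus-root-count argument. The one point you handle more carefully than the paper is the explicit check that $\Pi_i(g_{i+1})\neq 0$ (so the division in the recursion is well-defined), which the paper leaves implicit; this is a worthwhile addition, not a different approach.
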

\begin{proof}
We prove this by induction on $i$. The theorem clearly holds for $i=1$. Suppose that the theorem holds for a value of $i$ with $1\leq i < n$. By definition $\Pi_{i+1}(x)= \Pi_i (x)^q - \Pi_i (g_{i+1})^{q-1} \Pi_i (x)$, so that (using the induction hypothesis)  $\Pi_{i+1}(x)$ is a monic $q$-linearized polynomial of $q$-degree $i+1$ such that for $1\leq j \leq i+1$ we have $\Pi_{i+1}(g_j)=0$. It follows that then $\Pi_{i+1}(x)$ must coincide with $\Pi_{(g_1, g_2, \ldots , g_{i+1})}(x)$. 

We next show that the formula for $\Lambda_{i+1}(x)$ yields the $q$-Lagrange polynomial at level $i+1$. 
%
Assume that $\Lambda_i(x)$ is the $q$-Lagrange polynomial at level $i$ and look at $\Lambda_{ {i+1}}(x)$, which is $q$-linearized since $\Lambda_i(x)$ and $ \Pi_{i}(x)$ are $q$-linearized. As $\qdeg( \Pi_{i}(x))= i > \qdeg(\Lambda_i(x))$ it holds that $\qdeg(\Lambda_{i+1}(x)) = i$. Furthermore, because $\Pi_{i}(g_j)=0$ for $j=1,\dots,i$ and $\Lambda_{ {i}}(g_j)=r_j$ for $j=1,\dots,i$ , 
$$\Lambda_{ {i+1}}(g_j) = \Lambda_{ {i}}(g_j) = r_j , \quad \textnormal{ and }$$
$$\Lambda_{ {i+1}}(g_{i+1}) = \Lambda_{ {i}}(g_{i+1}) -  \frac{ \Lambda_{ {i}}(g_{i+1}) - r_{i+1}}{\Pi_{i} (g_{i+1})}   \Pi_{i}(g_{i+1}) = r_{i+1} .$$
Therefore, $\Lambda_{ {i+1}}(x)$ evaluates in the same values as $\Lambda_{(g_1,\dots,g_{i+1}), (r_1,\dots, r_{i+1})}(x)$ 
for $g_1,\dots,g_{i+1}$. Because of the linearity of both these polynomials they evaluate in the same values for all elements of  $\langle g_1,\dots,g_{i+1}\rangle$, and as the $g_i$ are linearly independent, these are $q^{i+1}$ many values. Since the degree of both polynomials is $q^i < q^{i+1}$, it follows that they must be the same polynomial.
\end{proof}

\subsection{Gabidulin codes}

Let $g_1,\dots, g_n \in \F_{q^m}$ be linearly independent over $\F_q$. We define a \emph{Gabidulin code} $C\subseteq \F_{q^m}^{n}$ as the linear block code with generator matrix $M_k(g_1,\dots, g_n)$.        
Using the isomorphic matrix representation we can interpret $C$ as a matrix code in $\F_q^{m\times n}$.The \emph{rank distance} $d_R$ on  $\F_q^{m\times n}$ is defined by
\[d_R(X,Y):= \rk_q(X-Y) \quad, \quad X,Y \in \F_q^{m\times n} \]
and analogously for the isomorphic extension field representation. 
It holds that the code $C$ constructed before has dimension $k$ over $\F_{q^m}$ and minimum rank distance (over $\F_q$) $n-k+1$. One can easily see by the shape of the parity check and the generator matrices that an equivalent definition of the code is
\[C =  \{(f(g_1),\dots,f(g_n))\in \F_{q^m}^n \mid f(x) \in \Lp_{<k}  \} ,\]
where $\Lp_{<k} := \{f(x) \in \Lp, \qdeg(f(x)) < k\}$. 
For more information on bounds and constructions of rank-metric codes the interested reader is referred to \cite{ga85a}.

Consider a received word $\mathbf r = (r_1,\dots,r_n) \in \F_{q^m}^n$ as the sum $\mathbf r = \mathbf c + \mathbf e$, where $\mathbf c = (c_1,\dots,c_n)\in C$ is a codeword and $\mathbf e = (e_1,\dots,e_n)\in \F_{q^m}^n$ is the error vector. 
We now recall the polynomial interpolation setup from~\cite{lo06} via a more general formulation in the next theorem.

\begin{thm}[\cite{ku14,lo06}]\label{thm2}
Let $f(x)\in \Lp, \qdeg(f(x))< k$ and $c_i=f(g_i)$ for $i=1,\dots,n$.
Then $d_R(\mathbf c, \mathbf r) = t$ if and only if there exists a $D(x) \in \Lp$, such that $ \qdeg(D(x))= t$ and
\[D(r_i) = D(f(g_i)) \quad \forall i\in\{1,\dots,n\}.\]
Furthermore, this $D(x)$ is unique.
\end{thm}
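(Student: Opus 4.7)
The plan is to translate the interpolation condition into a condition on the error vector, and then invoke the two lemmas about root spaces of $q$-linearized polynomials. Setting $e_i := r_i - f(g_i) = r_i - c_i$ and using the $\F_q$-linearity of every $D \in \Lp$, the condition $D(r_i) = D(f(g_i))$ is equivalent to $D(e_i) = 0$ for all $i$. Writing $E := \langle e_1, \ldots, e_n \rangle_{\F_q}$, the rank distance $d_R(\mathbf c, \mathbf r)$ equals $\dim_{\F_q} E$, so the problem reduces to relating this dimension to the $q$-degree of a $q$-linearized polynomial that vanishes on every $e_i$.

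For the forward direction, assuming $d_R(\mathbf c, \mathbf r) = t$ means $\dim_{\F_q} E = t$, and Lemma~\ref{lem:nullpoly} supplies the monic $q$-annihilator polynomial of $E$ in $\Lp$, whose $q$-degree is precisely $t$ and which by construction vanishes on all of $E$, hence on each $e_i$. For the reverse direction, given any $D \in \Lp$ of $q$-degree $t$ with $D(e_i) = 0$ for all $i$, Lemma~\ref{lem:rootspace} says the set of roots of $D$ forms an $\F_q$-subspace $V$ of the splitting field. Since $D$ has ordinary degree $q^t$, $V$ has size at most $q^t$, so $\dim_{\F_q} V \leq t$. Because $V \supseteq \{e_1, \ldots, e_n\}$ it also contains $E$, giving $\dim_{\F_q} E \leq t$.

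To upgrade this to the equality $\dim_{\F_q} E = t$ and obtain the uniqueness claim simultaneously, I would exploit the one-to-one correspondence, implicit in Lemmas~\ref{lem:rootspace} and~\ref{lem:nullpoly}, between monic $q$-linearized polynomials of $q$-degree $t$ in $\Lp$ and $t$-dimensional $\F_q$-subspaces of the splitting field. Under this correspondence, the admissible $D$ of $q$-degree $t$ correspond exactly to the $t$-dimensional $\F_q$-subspaces $V$ containing $E$. If $\dim_{\F_q} E < t$, then several such $V$ exist (provided the ambient field is large enough), producing multiple admissible $D$; if $\dim_{\F_q} E = t$, the only option is $V = E$, yielding the unique monic $D$, namely the $q$-annihilator of $E$. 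Hence existence of a unique $D$ of $q$-degree $t$ satisfying the interpolation conditions is equivalent to $d_R(\mathbf c, \mathbf r) = t$.

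The main obstacle I anticipate is a careful reading of the uniqueness clause: taken literally, any nonzero scalar multiple of $D$ also satisfies the stated conditions, so some normalization (most naturally, monic) must be fixed before uniqueness is meaningful. Once this is pinned down, the remainder of the argument is a direct, almost mechanical application of Lemmas~\ref{lem:rootspace} and~\ref{lem:nullpoly}, and no further combinatorial difficulty remains.
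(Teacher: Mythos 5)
The paper itself does not prove this theorem --- it is quoted from \cite{ku14,lo06} --- so your proposal can only be judged on its own terms. Your overall strategy is the natural one and most of it is sound: reducing $D(r_i)=D(f(g_i))$ to $D(e_i)=0$ via linearity, identifying $d_R(\mathbf c,\mathbf r)$ with $\dim_{\F_q} E$ where $E=\langle e_1,\dots,e_n\rangle_{\F_q}$, getting existence from the $q$-annihilator of $E$ (Lemma~\ref{lem:nullpoly}), and getting $\dim_{\F_q}E\le t$ from the root-space bound (Lemma~\ref{lem:rootspace}). Your observation that the literal ``if'' direction fails without invoking uniqueness, and that uniqueness only makes sense after normalizing $D$ to be monic, is also correct and worth making explicit.

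The genuine gap is the ``one-to-one correspondence between monic $q$-linearized polynomials of $q$-degree $t$ and $t$-dimensional $\F_q$-subspaces'' on which your uniqueness/non-uniqueness dichotomy rests. No such bijection exists: a $q$-linearized polynomial may have repeated roots (exactly when its coefficient of $x^{[0]}=x$ vanishes), e.g.\ $D(x)=x^{[t]}=x^{q^t}$ is monic of $q$-degree $t$ but its root space is $\{0\}$. So the admissible monic $D$ of $q$-degree $t$ are \emph{not} classified by the $t$-dimensional subspaces $V\supseteq E$, and your non-uniqueness argument for the case $\dim_{\F_q}E<t$ breaks down precisely in the situation your own parenthetical flags: if $t=m$ and $E\subsetneq\F_{q^m}$ there is only one $t$-dimensional subspace containing $E$, yet uniqueness still fails. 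The clean repair is to argue by symbolic factorization instead of by subspaces: every $D\in\Lp$ vanishing on $E$ factors as $D=H\circ\Pi_E$ with $\Pi_E$ the monic annihilator of $E$ (the same right-division argument as in Lemma~\ref{lem3}), so if $s:=\dim_{\F_q}E<t$ there are $q^{m(t-s)}>1$ monic choices of $H$ of $q$-degree $t-s$, each giving a distinct admissible $D$ (distinct because $\Lp$ has no zero divisors); whereas if $s=t$ any monic admissible $D$ has $q^t$ distinct roots and ordinary degree $q^t$, forcing $D=\prod_{e\in E}(x-e)$. With that substitution the equivalence and the uniqueness claim both follow.
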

%

\begin{rem}
The previous theorem states that the roots of $D(x)$ form a vector space of degree $t$ which is equal to the span of $e_1,\dots,e_n$ (for this note that $e_i=f(g_i)-r_i$). This is why $D(x)$ is also called the \emph{error span polynomial} (cf.\ e.g.\ \cite{si09}). The analogy in the classical Hamming metric set-up is the \emph{error locator polynomial}, whose roots indicate the locations of the errors, and whose degree equals the number of errors.
\end{rem}

%

\subsection{Gr\"obner bases}

We will now recall some definitions and results on Gr\"obner bases of $\Lp^2$, since we will need them 
later on in this paper.
Elements of $\Lp^2$ are of the form 
$$[f(x) \;\; g(x)] = f(x) e_1 + g(x) e_2$$ 
where $f(x)=\sum f_i x^i,g(x) = \sum g_i x^i \in \Lp$ and $e_1,e_2$ are the two unit vectors of length $2$. 

\begin{defi}
The \emph{$(k_1,k_2)$-weighted $q$-degree} of  $[f(x) \;\; g(x)]$ is defined as  $\max\{k_1+ \qdeg(f(x)) , k_2+ \qdeg(g(x))\}$.
\end{defi}

The monomials of $[f(x) \;\; g(x)]$ are of the form $x^{[i]} e_1$ and $x^{[j]} e_2$ for all $i$ such that $f_i\neq 0$ and $j$ such that $g_j\neq 0$, respectively.

\begin{defi}
The \emph{term-over-position (TOP) monomial order} is defined as 
$$x^{[i_1]} e_{j_1}< x^{[i_2]} e_{j_2} :\iff i_1<i_2 \textnormal{ or }  [i_1 = i_2 \textnormal{ and } j_1<j_2 ]  .$$ 
The \emph{$(k_1,k_2)$-weighted TOP monomial order} is defined as 
$$x^{[i_1]} e_{j_1}<_{(k_1,k_2)} x^{[i_2]} e_{j_2} :\iff $$ $$i_1+k_{j_1}<i_2 + k_{j_2} \textnormal{ or }  [i_1+k_{j_1} = i_2+k_{j_2} \textnormal{ and } j_1<j_2 ]  .$$ 
\end{defi}

We can order all monomials of an element $V\in\Lp^2$ in decreasing order with respect to the (weighted or non-weighted) TOP monomial order. Rename them such that $x^{[i_1]}e_{j_1}> x^{[i_2]}e_{j_2}> \dots $. Then
\begin{enumerate}
\item the \emph{leading monomial} $\mathrm{lm}(V)=x^{[i_1]}$ is the greatest monomial of $V$.
\item the \emph{leading position} $\mathrm{lpos}(V)={j_1}$ is the vector coordinate of the leading monomial.
\end{enumerate}

\begin{defi}\label{defi10}
Let $M\subseteq \Lp^2$ be a left module.
A subset $B\subset M$ is called a \emph{Gr\"obner basis} of $M$ if the leading monomials of $B$ span a left module that contains all leading monomials of $M$.
\end{defi}

 In the context of this paper we are only interested in modules with a basis consisting of two vectors, say $b_1,b_2\in\Lp^2$. It can be easily seen from Definition \ref{defi10} that such a basis $\{b_1,b_2\}$ is a Gr\"obner basis if and only if $\mathrm{ lpos} (b_1)\neq \mathrm{ lpos}(b_2)$. In fact, for this restricted special case such a basis coincides with a {\em minimal} Gr\"obner basis (see e.g.\ \cite{ku11}).


\section{Iterative Construction of the Interpolation Module}\label{sec:interpolation}



For the remainder of the paper 
let $g_1,\dots, g_n \in\F_{q^m}$ be linearly independent over $\F_q$ and let $M_k(g_1,\dots,g_n)$ be the generator matrix of the Gabidulin code $C\subseteq \F_{q^m}^n$. Let  $\mathbf{r}=(r_1,\dots,r_n) \in \F_{q^m}^n$ be the received word and denote $\mathbf{g}=(g_1,\dots,g_n) $.
%
Furthermore we need the following fact.

\begin{lem}[\cite{ku14}]\label{lem3}
Let  $L(x) \in \Lp$, such that $L(g_i)=0$ for all $i$. Then
\[\exists H(x)\in \Lp : L(x) = H(x)\circ \prod_{g \in \langle g_1,\dots,g_n\rangle}(x-g)  . \]
\end{lem}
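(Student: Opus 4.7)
The plan is to show that the $q$-annihilator polynomial of the subspace, $\Pi(x) := \prod_{g \in \langle g_1,\dots,g_n\rangle}(x-g)$, right-divides $L(x)$ symbolically with zero remainder. By Lemma~\ref{lem:nullpoly}, $\Pi(x)\in\Lp$; since the $g_i$ are linearly independent over $\F_q$, the subspace has exactly $q^n$ elements, so $\Pi(x)$ is monic of $q$-degree $n$ (viewed as an ordinary polynomial it has degree $q^n$).

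I would then apply symbolic right division in $\Lp$ by the monic element $\Pi(x)$, an operation recalled at the beginning of Section~\ref{sec:prelim}, to write $L(x)=H(x)\circ\Pi(x) + R(x)$ with $H(x),R(x)\in\Lp$ and $\qdeg(R)<n$. The claim then reduces to showing $R(x)=0$.

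For this, I would evaluate at each $g_i$: since $g_i\in\langle g_1,\dots,g_n\rangle$ is a root of $\Pi(x)$, one obtains $0=L(g_i)=H(\Pi(g_i))+R(g_i)=H(0)+R(g_i)=R(g_i)$. Because $R(x)$ is $q$-linearized, hence $\F_q$-linear, it therefore vanishes on the entire span $\langle g_1,\dots,g_n\rangle$, which contains $q^n$ distinct elements. However, the ordinary degree of $R(x)$ is at most $q^{n-1}<q^n$, forcing $R(x)=0$. Hence $L(x)=H(x)\circ\Pi(x)$, as desired.

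I do not anticipate any serious obstacle here: the only step meriting care is the assertion that symbolic right division by a monic element of $\Lp$ yields a remainder of strictly smaller $q$-degree. This is the standard analogue of Euclidean division in the skew-polynomial ring $\Lp$, and is exactly what is needed to make the root-counting argument in the last paragraph bite.
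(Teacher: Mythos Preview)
Your argument is correct. The paper does not actually supply a proof of this lemma: it is stated with a citation to~\cite{ku14} and used as a black box. Your approach---perform symbolic right division by the monic $q$-annihilator $\Pi(x)$, then kill the remainder by counting roots---is the standard one and would be the natural proof to include. The only small point worth making explicit is that $H(0)=0$ because $H\in\Lp$ has no constant term; you used this implicitly when passing from $H(\Pi(g_i))+R(g_i)$ to $R(g_i)$.
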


In the following we abbreviate the row span of a (polynomial) matrix $A$ by $\rs(A)$.

\begin{defi}
The \emph{interpolation module} $\mathfrak{M}(\bf r)$ for $\mathbf r$ is defined as the left submodule of $\Lp$, given by
\[
\mathfrak{M}(\bf r) := \rs \left[\begin{array}{cc}  \Pi_\mathbf{g} (x) & 0 \\ -\Lambda_{\bf g,r}(x) & x \end{array}\right].
\]
\end{defi}

We identify any $[f(x) \quad g(x)] \in  \mathfrak{M}(\bf r)$ with the bivariate linearized $q$-polynomial $Q(x,y)= f(x) + g(y)$.
It was shown in our recent paper~\cite{ku14} that the name interpolation module is justified for $\mathfrak{M}(\bf r)$:

\begin{thm}[\cite{ku14}]\label{thm5}
$\mathfrak{M}(\bf r)$ consists exactly of all $Q(x,y)= f(x) +g(y)$ with $f(x), g(x) \in \Lp$, such that $Q(g_i,r_i)=0$ for $i=1,\dots,n$.
\end{thm}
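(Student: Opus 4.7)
The plan is to prove the two inclusions separately, using the explicit form of the generators of $\mathfrak{M}(\mathbf{r})$ together with the defining properties of $\Pi_{\mathbf{g}}(x)$ and $\Lambda_{\mathbf{g},\mathbf{r}}(x)$, and invoking Lemma~\ref{lem3} for the harder direction.

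For the forward inclusion, I would take an arbitrary element of the module and write it as a left combination $a(x) \circ [\Pi_{\mathbf{g}}(x)\;\; 0] + b(x) \circ [-\Lambda_{\mathbf{g},\mathbf{r}}(x)\;\; x]$ for some $a(x), b(x) \in \mathcal{L}_q(x,q^m)$. This yields the pair $[a(\Pi_{\mathbf{g}}(x)) - b(\Lambda_{\mathbf{g},\mathbf{r}}(x))\;\; b(x)]$, corresponding to $Q(x,y) = a(\Pi_{\mathbf{g}}(x)) - b(\Lambda_{\mathbf{g},\mathbf{r}}(x)) + b(y)$. Evaluating at $(g_i,r_i)$ and using $\Pi_{\mathbf{g}}(g_i)=0$ together with $\Lambda_{\mathbf{g},\mathbf{r}}(g_i)=r_i$ immediately gives $Q(g_i,r_i)=0$.

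For the reverse inclusion, let $Q(x,y) = f(x) + g(y)$ satisfy $Q(g_i,r_i)=0$ for all $i$. I want to produce $a(x) \in \mathcal{L}_q(x,q^m)$ so that
\[
[f(x)\;\; g(x)] = a(x) \circ [\Pi_{\mathbf{g}}(x)\;\; 0] + g(x) \circ [-\Lambda_{\mathbf{g},\mathbf{r}}(x)\;\; x],
\]
which forces $a(\Pi_{\mathbf{g}}(x)) = f(x) + g(\Lambda_{\mathbf{g},\mathbf{r}}(x))$. Defining $L(x) := f(x) + g(\Lambda_{\mathbf{g},\mathbf{r}}(x))$, which is again $q$-linearized, the crux is right-divisibility of $L(x)$ by $\Pi_{\mathbf{g}}(x)$.

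This is where Lemma~\ref{lem3} supplies the key step. Evaluating $L$ at $g_i$ gives $L(g_i) = f(g_i) + g(\Lambda_{\mathbf{g},\mathbf{r}}(g_i)) = f(g_i) + g(r_i) = Q(g_i,r_i) = 0$, so $L$ vanishes on every $g_i$ and hence, by $\F_q$-linearity of its root set (Lemma~\ref{lem:rootspace}), on all of $\langle g_1,\dots,g_n\rangle$. Lemma~\ref{lem3} then yields $a(x) \in \mathcal{L}_q(x,q^m)$ with $a \circ \Pi_{\mathbf{g}} = L$, which finishes the argument. I expect this invocation of Lemma~\ref{lem3} to be the main obstacle in the sense that it is the only nontrivial tool used; the rest is bookkeeping with the row combination and the defining evaluations $\Pi_{\mathbf{g}}(g_i)=0$ and $\Lambda_{\mathbf{g},\mathbf{r}}(g_i)=r_i$.
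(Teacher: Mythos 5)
Your proof is correct; the paper itself defers this theorem to \cite{ku14} and gives no proof here, but the fact it imports immediately beforehand (Lemma~\ref{lem3}) is exactly the right-divisibility statement your argument hinges on, so your route is evidently the intended one. Both inclusions check out: the forward direction uses $\Pi_{\mathbf g}(g_i)=0$, $\Lambda_{\mathbf g,\mathbf r}(g_i)=r_i$ and $a(0)=0$ for $q$-linearized $a$, and the reverse direction correctly forces the second component to be $g(x)$ and reduces to right-divisibility of $f(x)+g(\Lambda_{\mathbf g,\mathbf r}(x))$ by $\Pi_{\mathbf g}(x)$, which Lemma~\ref{lem3} supplies from the vanishing at the $g_i$.
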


%


The following statements can also be found in our recent paper~\cite{ku14}:

\begin{thm}\label{thm:main}
The elements $[N(x) \quad -D(x)]$ of $\mathfrak{M}(\bf r)$  that fulfill
\begin{enumerate}
\item $\qdeg(N(x))\leq t+k-1$,
\item $\qdeg(D(x))=t$,
\item $N(x)$ is symbolically divisible on the right by $D(x)$, i.e.\ there exists $f(x)\in\Lp$ such that $D(f(x))=N(x)$,
\end{enumerate}
are in one-to-one correspondence with the codewords of rank distance $t$ to $\mathbf r$. 
\end{thm}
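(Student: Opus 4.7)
My plan is to build the bijection by giving explicit maps in both directions based on the error-span polynomial supplied by Theorem~\ref{thm2}. The central observation is that a codeword $\mathbf c$ at distance $t$ from $\mathbf r$ carries a canonical interpolating polynomial $D$ of $q$-degree $t$, and composing $D$ with the underlying message polynomial $f$ produces an $N$ such that $[N\;\;-D]$ lies in $\mathfrak M(\mathbf r)$ and satisfies the stated conditions.

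For the forward map (codeword $\to$ pair), take $\mathbf c = (f(g_1),\dots,f(g_n))$ with $\qdeg(f) < k$ and $d_R(\mathbf c,\mathbf r)=t$. Theorem~\ref{thm2} supplies a unique $D\in\Lp$ of $q$-degree $t$ with $D(r_i)=D(f(g_i))$ for all $i$. Setting $N:=D\circ f$ makes condition (3) hold by construction and (2) by definition; condition (1) follows from additivity of $q$-degrees under composition, $\qdeg(N)=\qdeg(D)+\qdeg(f)\leq t+(k-1)$. The membership $[N\;\;-D]\in\mathfrak M(\mathbf r)$ reduces via Theorem~\ref{thm5} to the identity $N(g_i)-D(r_i) = D(f(g_i))-D(r_i)=0$.

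For the reverse map (pair $\to$ codeword), start with $[N\;\;-D]\in\mathfrak M(\mathbf r)$ satisfying (1)–(3). Condition (3) provides a right symbolic quotient $f$ with $N=D\circ f$; combining (1) and (2) forces $\qdeg(f)\leq k-1$, so $f\in\Lp_{<k}$ and $\mathbf c:=(f(g_1),\dots,f(g_n))$ is a codeword of $C$. Membership gives $D(r_i)=N(g_i)=D(f(g_i))$, hence $D(e_i)=0$ for every $e_i:=r_i-f(g_i)$ by $\F_q$-linearity of $D$. Lemma~\ref{lem:rootspace} then bounds the $\F_q$-dimension of the error span by $\qdeg(D)=t$, giving $d_R(\mathbf c,\mathbf r)\leq t$.

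The main obstacle is promoting this inequality to the equality $d_R(\mathbf c,\mathbf r)=t$, and with it closing the loop to an honest bijection. I would appeal to the uniqueness assertion of Theorem~\ref{thm2}: if the distance were $t'<t$ then a unique minimal-$q$-degree $D^*$ of $q$-degree $t'$ would be associated with $\mathbf c$, and reconciling $\qdeg(D)=t$ with $D$ annihilating exactly the same error space must be ruled out by this uniqueness once one re-applies the forward construction. Once this equality is secured, the two maps are manifestly mutual inverses: the codeword-to-pair-to-codeword round-trip recovers $f$ by uniqueness of right symbolic division, while the pair-to-codeword-to-pair round-trip recovers $D$ by the uniqueness in Theorem~\ref{thm2}, with $N=D\circ f$ then forced.
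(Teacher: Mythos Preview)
The paper does not actually prove Theorem~\ref{thm:main}; it is imported from~\cite{ku14}, so there is no in-paper argument to compare against and I can only assess your proof on its own terms.

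Your two maps are set up correctly and are the natural ones. One minor remark on the forward direction: Theorem~\ref{thm2} pins down $D$ only up to a nonzero $\F_{q^m}$-scalar, so the pair $[N\;\;-D]$ attached to a codeword is really a scalar class; the ``one-to-one'' has to be read modulo that normalisation (e.g.\ monic $D$).

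The gap you isolate is real, and your proposed patch does not close it. The uniqueness clause in Theorem~\ref{thm2} asserts uniqueness of a $D$ whose $q$-degree \emph{equals} the rank distance; it does not forbid higher-$q$-degree polynomials that annihilate the same error space, and such polynomials certainly exist. Concretely, suppose $d_R(\mathbf c,\mathbf r)=t'<t$ with monic error-span annihilator $D^*$ of $q$-degree $t'$. For any $g\in\Lp$ with $\qdeg(g)=t-t'$, set $D:=g\circ D^*$ and $N:=D\circ f$. Then $[N\;\;-D]\in\mathfrak M(\mathbf r)$ and all of (1)--(3) hold, yet the associated codeword sits at distance $t'<t$. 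So the equality $d_R(\mathbf c,\mathbf r)=t$ simply cannot be extracted from conditions (1)--(3) alone, and many such pairs map to the same nearby codeword. (The same example shows that the ``if'' direction of Theorem~\ref{thm2}, read literally, is already delicate; both statements are exact only at the \emph{minimal} admissible $t$.)

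What your argument does establish cleanly is the version the paper actually uses immediately after the theorem: pairs satisfying (1)--(3) yield codewords at rank distance \emph{at most} $t$, and for the smallest $t$ admitting such a pair the correspondence (with monic $D$) is a genuine bijection onto the set of closest codewords. That is the operationally relevant statement for the decoding algorithm.
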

%


Therefore, list decoding within rank radius $t$ is equivalent to finding all elements  $[N(x) \quad -D(x)]$ in $\mathfrak{M}(\bf r)$ with $(0,k-1)$-weighted $q$-degree less than or equal to $t+k-1$ and $\qdeg(N(x))\leq \qdeg(D(x))+k-1$, such that $N(x)$ is symbolically divisible on the right by $D(x)$.
It follows that, to find all closest codewords to a given $\mathbf r \in \F_{q^m}^n$, we need to find all elements  $[N(x) \quad -D(x)] \in \mathfrak{M}(\bf r)$ of minimal $(0,k-1)$-weighted $q$-degree such that $\qdeg(N(x))\leq \qdeg(D(x))+k-1$ and $N(x)$ is symbolically divisible on the right by $D(x)$. This minimality requirement leads us to construct a minimal Gr\"obner basis for  $\mathfrak{M}(\bf r)$.
Note that this is a generalization of the interpolation-based decoding method from \cite{lo06}. The difference is that our method can also decode beyond the unique decoding radius.

In contrast to our previous paper \cite{ku14} the algorithm below is iterative in the sense that it adds an extra pair of interpolation points $(g_i,r_i)$ at the $i$-th step of the algorithm. More specifically, in the $i$-th step a minimal Gr\"obner basis is constructed for the interpolation module corresponding to $(g_1,\dots,g_i), (r_1,\dots,r_i)$. 

\begin{thm}\label{thm:it_min_bas}
For $i=1,\dots,n$ denote by $\mathfrak{M}_i$ the interpolation module for $(g_1,\dots,g_i)$ and $(r_1,\dots,r_i)$. Let 
\[\left[\begin{array}{cc}  P(x) & -K(x) \\ N(x) & -D(x) \end{array}\right]\]
be a basis for $\mathfrak{M}_{i-1}$ and
$$\Delta_i := N(g_i)- D(r_i) \quad, \quad \Gamma_i := P(g_i)-K(r_i) .$$ 
If $\Gamma_i\neq 0$, then the row vectors of 
\[\left[\begin{array}{cc}  x^q-  \Gamma_i^{q-1} x & 0 \\ \Delta_i x& -\Gamma_i x\end{array}\right] \circ \left[\begin{array}{cc}  P(x) & -K(x) \\ N(x) & -D(x) \end{array}\right]\]
form a basis of $\mathfrak{M}_i$.
If $\Delta_i\neq 0$, then the row vectors of 
\[\left[\begin{array}{cc}   \Delta_i x& -\Gamma_i x \\ 0& x^q-  \Delta_i^{q-1} x  \end{array}\right] \circ \left[\begin{array}{cc}  P(x) & -K(x) \\ N(x) & -D(x) \end{array}\right]\]
form a basis of $\mathfrak{M}_i$.
\end{thm}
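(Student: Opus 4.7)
The plan is to view $\mathfrak{M}_i$ as the kernel of a single $\Lp$-linear evaluation on $\mathfrak{M}_{i-1}$ and exhibit the new matrix rows as explicit generators of that kernel.

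By Theorem~\ref{thm5}, $\mathfrak{M}_i$ consists exactly of those $w \in \mathfrak{M}_{i-1}$ whose bivariate representation also vanishes at $(g_i,r_i)$. Writing $w = u \circ [P\;-K] + v \circ [N\;-D]$ with $u,v \in \Lp$, the $\F_q$-linearity of the $q$-linearized polynomials $u$ and $v$ yields
\[
w(g_i,r_i) = u(P(g_i)-K(r_i)) + v(N(g_i)-D(r_i)) = u(\Gamma_i) + v(\Delta_i).
\]
Hence, under the bijection $w \leftrightarrow (u,v)$, the module $\mathfrak{M}_i$ corresponds to
\[
\cN := \{(u,v)\in\Lp^2 : u(\Gamma_i)+v(\Delta_i)=0\},
\]
which is a left $\Lp$-submodule of $\Lp^2$ because left composition by any $\mu\in\Lp$ preserves the constraint: $\mu(u(\Gamma_i)+v(\Delta_i)) = \mu(0) = 0$.

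For the case $\Gamma_i \neq 0$, the two outer rows of the product matrix translate into the coefficient pairs $c_1 := (x^q-\Gamma_i^{q-1}x,\,0)$ and $c_2 := (\Delta_i x,\,-\Gamma_i x)$, which lie in $\cN$ by the identities $\Gamma_i^q-\Gamma_i^{q-1}\Gamma_i=0$ and $\Delta_i\Gamma_i-\Gamma_i\Delta_i=0$. To show that $c_1,c_2$ generate $\cN$, I would perform two successive symbolic reductions on an arbitrary $(u,v)\in\cN$. Since $-\Gamma_i x$ has composition inverse $-\Gamma_i^{-1}x$, there is a unique $\gamma \in \Lp$ with $\gamma \circ (-\Gamma_i x) = v$; subtracting $\gamma \circ c_2$ reduces $(u,v)$ to $(u',0)$ where $u' := u - \gamma \circ (\Delta_i x)$. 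This difference still lies in $\cN$, so $u'(\Gamma_i)=0$, and by $\F_q$-linearity $u'$ vanishes on the entire line $\langle\Gamma_i\rangle_{\F_q}$, which is exactly the root space of $B(x) := x^q-\Gamma_i^{q-1}x$. Lemma~\ref{lem3}, applied to this one-dimensional subspace, then delivers an $\alpha \in \Lp$ with $u' = \alpha\circ B$, giving $(u',0)=\alpha\circ c_1$ and hence $(u,v)=\alpha\circ c_1 + \gamma\circ c_2$.

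The case $\Delta_i \neq 0$ is handled by the symmetric argument: first right-divide $u$ symbolically by $\Delta_i x$ and subtract a multiple of the new first row $(\Delta_i x,\,-\Gamma_i x)$ to reduce $(u,v)$ to $(0,v')$ with $v'(\Delta_i)=0$; then apply Lemma~\ref{lem3} once more to right-divide $v'$ by $x^q-\Delta_i^{q-1}x$. I expect the main conceptual obstacle to be recognising that each reduction step is a right symbolic division, and that $x^q - \gamma^{q-1}x$ is precisely the $q$-annihilator polynomial of the one-dimensional $\F_q$-space $\langle\gamma\rangle$; once this is in view, the decomposition mirrors the inductive step of Proposition~\ref{prop:Lagrec}.
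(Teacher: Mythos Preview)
Your proof is correct and takes a genuinely different route from the paper's. The paper verifies directly that the new rows satisfy all interpolation constraints $Q(g_j,r_j)=0$ for $j\le i$, and then shows spanning by explicitly manufacturing coefficients $a(x),b(x)\in\Lp$ so that the canonical generators $[\Pi_i(x)\;\;0]$ and $[\Lambda_i(x)\;\;-x]$ of $\mathfrak{M}_i$ appear as left combinations of the new rows. You instead pass to coordinates with respect to the old basis of $\mathfrak{M}_{i-1}$, identify $\mathfrak{M}_i$ with the kernel $\cN=\{(u,v):u(\Gamma_i)+v(\Delta_i)=0\}$, and obtain generation by a clean two-step reduction (clear the second coordinate using the unit $-\Gamma_i x$, then right-divide by the rank-one annihilator $x^q-\Gamma_i^{q-1}x$). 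Your argument is more conceptual and avoids the explicit $\Pi_i,\Lambda_i$ bookkeeping; the paper's argument has the advantage of making visible the link to the canonical module generators and to Proposition~\ref{prop:Lagrec}. One small point you leave implicit: freeness of $\{c_1,c_2\}$ (so that ``generate'' upgrades to ``basis''); this follows immediately from the triangular shape of the update matrix with nonzero diagonal, or from the uniqueness of $\gamma$ and $\alpha$ in your reduction.
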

\begin{proof}
We first consider the first case and show that both row vectors are in $\mathfrak{M}_i$. From the assumptions it follows that $P(g_j)=K(r_j)$ and that $N(g_j)=D(r_j)$ for $1\leq j<i$. Moreover, the two entries of the first row are given by
$$(x^q- \Gamma_i^{q-1} x) \circ P(x) = P(x)^q-\Gamma_i^{q-1} P(x), $$
$$(x^q- \Gamma_i^{q-1} x) \circ K(x) = K(x)^q- \Gamma_i^{q-1} K(x) ,$$
thus
$P(g_j)^q- \Gamma_i^{q-1} P(g_j) - K(r_j)^q+ \Gamma_i^{q-1} K(r_j)=0$ for $1\leq j\leq i$.
For the second row we get
$$\Delta_i P(g_j) - \Gamma_i N(g_j) - \Delta_i K(r_j) + \Gamma_i D(r_j) = $$
$$\Delta_i (P(g_j) -K(r_j)) - \Gamma_i (N(g_j) -D(r_j)) = \Delta_i \Gamma_i - \Gamma_i \Delta_i= 0$$
for $1\leq j\leq i$. Thus, the two row vectors are elements of $\mathfrak{M}_i$. 

It remains to show that the two row vectors span the whole interpolation module (and not just a submodule of it). For this, we note that there exist $\bar a(x), \bar b(x) \in \Lp$ such that $\bar a(x)\circ [ P(x) \; -K(x) ] + \bar b(x) \circ [ N(x) \; -D(x) ] = [ \Pi_{i-1}(x) \; 0 ]$. Let $b(x) = -(x^q-\Pi_{i-1}(g_i)^{q-1} x)\circ \bar b(\frac{1}{\Gamma_i}x) \in \Lp$ and let $a(x)\in\Lp$ such that $a(x)\circ (x^q-\Gamma_i^{q-1}x) =(x^q-\Pi_{i-1}(g_i)^{q-1} x)\circ\left( \bar b(\frac{\Delta_i}{\Gamma_i}x)+\bar a(x)\right)$. Note that $a(x)$ is well-defined by Lemma \ref{lem3} since $\Gamma_i$ is a root of the right side of the previous equation. Denote the first and second row of the new basis by $b_1$ and $b_2$, respectively. Then $a(x)\circ b_1 + b(x) \circ b_2 = [ \Pi_{i}(x) \; 0 ]$, i.e.\  $[ \Pi_{i}(x) \; 0 ]$ is in the module spanned by the new basis. Analogously we can construct $a(x), b(x)\in\Lp$ such that $a(x)\circ b_1 + b(x) \circ b_2 = [ \Lambda_{i}(x) \; -x ]$. Hence, we have shown that the new basis spans the whole interpolation module.

For the second case note that
$$ \rs \left(\left[\begin{array}{cc}   \Delta_i x& -\Gamma_i x \\ 0& x^q-  \Delta_i^{q-1} x  \end{array}\right] \circ \left[\begin{array}{cc}  P(x) & -K(x) \\ N(x) & -D(x) \end{array}\right] \right)$$
$$= \rs \left(\left[\begin{array}{cc}    x^q-  \Delta_i^{q-1} x & 0 \\  \Gamma_i x & -\Delta_i x  \end{array}\right] \circ \left[\begin{array}{cc}  N(x) & -D(x)\\ P(x) & -K(x)  \end{array}\right] \right),$$
which corresponds to the first case after exchanging $P(x)$ with $N(x)$ and $K(x)$ with $D(x)$ (and vice versa).
\end{proof}

\begin{rem}
In the notation of Proposition~\ref{prop:Lagrec}, applying the previous theorem to $P(x)=\Pi_{i-1}(x), K(x)=0, N(x)=\Lambda_{i-1}(x)$ and $D(x)=-x$, leads to a computation that is identical up to a constant to the one in Proposition~\ref{prop:Lagrec} in which the $q$-annihilator polynomial and the $q$-Lagrange polynomial are iteratively constructed.
\end{rem}


\section{The Algorithm}\label{sec:algorithm}

Using Theorem \ref{thm:it_min_bas} as our main ingredient, we now set out to design an iterative algorithm that computes a minimal Gr\"obner basis for $\mathfrak{M}_i$ at each step $i$. 
We note that the calculation of the matrices $B_i$ in our algorithm coincides with the calculations in the interpolations algorithms of \cite{ko08,xi11}.
The complete decoding algorithm, stated in Algorithm \ref{alg1}, first computes a minimal   Gr\"obner basis for $\mathfrak{M}_n$ and then uses a parametrization to find all closest codewords to the received word.


%
%
%
%
\begin{algorithm}
\caption{Iterative minimal list decoding of Gabidulin codes.}
\label{alg1}
\begin{algorithmic}
\REQUIRE Linearly independent $g_1,\dots, g_n\in \F_{q^m}$, received $r_1,\dots,r_n \in \F_{q^m}$.
\STATE Initialize   \textbf{list}$:=[ \;]$, $j:=0$, $B_0 :=  \left[\begin{array}{cc}x& 0 \\ 0 & x \end{array}\right] $ .
\STATE We denote $B_i := \left[\begin{array}{cc}  P_i(x) & -K_i(x) \\ N_i(x) & -D_i(x) \end{array}\right] $.
\FOR{$i$ from $1$ to $n$}
\STATE  
$\Delta_i := N_{i-1}(g_i)- D_{i-1}(r_i) \quad, \quad \Gamma_i := P_{i-1}(g_i)-K_{i-1}(r_i) .$
\IF{[$\qdeg(P_{i-1}(x))\leq \qdeg(D_{i-1}(x))+k-1\;  \AND \; \Gamma_i\neq 0$] or  $\Delta_i =0$}
\vspace{0.4cm}
\STATE
$B_i := \left[\begin{array}{cc}  x^q-\Gamma_i^{q-1} x & 0 \\ \Delta_i x& -\Gamma_i x\end{array}\right] \circ B_{i-1}$
\ELSE
\vspace{0.4cm}
\STATE
$B_i:=\left[\begin{array}{cc} \Delta_i x& -\Gamma_i x \\  0 & x^q-\Delta_i^{q-1} x \end{array}\right] \circ B_{i-1}$
\ENDIF
\ENDFOR
\STATE Set $b_1(x):=$ first row of $B_n$, $b_2(x):=$ second row of $B_n$, $\ell_1:= \qdeg(b_1)$, $\ell_2:= \qdeg(b_2)$.
\WHILE{\textbf{list}$=[\;]$}
\FORALL{$a(x)\in \Lp, \qdeg(a(x))\leq \ell_2-\ell_1+j$}
\FORALL{monic $c(x) \in \Lp, \qdeg(b(x))=j$}
\STATE $f(x) := a(x)\circ b_1(x) + c(x)\circ b_2(x)$
\IF{$f^{(1)}(x)$ is symb. (right) divisible by $f^{(2)}(x)$} 
\STATE add the respective symb. quotient to \textbf{list}
\ENDIF
\ENDFOR
\ENDFOR
\STATE $j:=j+1$
\ENDWHILE
\RETURN \textbf{list}
%
%
\end{algorithmic}
\end{algorithm}
\begin{thm}
Algorithm \ref{alg1} yields a list of all message polynomials such that the corresponding codeword is closest to the received word.
\end{thm}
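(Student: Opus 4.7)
The plan is to establish three properties of Algorithm~\ref{alg1} in sequence, one for the inner basis construction and two for the outer enumeration.

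First, I would show by induction on $i$ that after the $i$-th pass of the main \emph{for}-loop, $B_i$ is a minimal Gr\"obner basis of $\mathfrak{M}_i$ with respect to the $(0,k-1)$-weighted TOP monomial order. The initialisation $B_0$ trivially satisfies this: its rows have leading positions $1$ and $2$, and, since no interpolation points have been imposed, the module generated is all of $\Lp^2$. For the inductive step, Theorem~\ref{thm:it_min_bas} already supplies that either update yields a basis of $\mathfrak{M}_i$; what remains is to verify that the case distinction of Algorithm~\ref{alg1}---determined by comparing $\qdeg(P_{i-1})$ with $\qdeg(D_{i-1})+k-1$ and by the (non)vanishing of $\Gamma_i,\Delta_i$---is exactly the correct choice so that (i) the two leading positions stay distinct, which by the remark following Definition~\ref{defi10} keeps the basis a Gr\"obner basis, and (ii) the weighted $q$-degrees of the rows grow as slowly as possible. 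This is essentially a careful bookkeeping of how multiplication by $x^q - \Gamma_i^{q-1}x$ (or $x^q-\Delta_i^{q-1}x$) and by $\Delta_i x$ (or $\Gamma_i x$) alters the degree profile in each coordinate.

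Second, writing $b_1,b_2$ for the rows of $B_n$ with weighted $q$-degrees $\ell_1\le\ell_2$, the Gr\"obner basis property gives a unique expression $a(x)\circ b_1(x)+c(x)\circ b_2(x)$ for every element of $\mathfrak{M}_n$; since the leading monomials of $b_1$ and $b_2$ sit in different coordinates, no cancellation of leading terms is possible and the $(0,k-1)$-weighted $q$-degree of the combination equals $\max\{\qdeg(a)+\ell_1,\,\qdeg(c)+\ell_2\}$. Hence the elements of $\mathfrak{M}_n$ of weighted degree at most $\ell_2+j$ are precisely those with $\qdeg(a)\le\ell_2-\ell_1+j$ and $\qdeg(c)\le j$, which is the parametrization traversed in the \emph{while}-loop (with the scaling freedom absorbed by requiring $c$ monic).

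Third, by Theorem~\ref{thm:main}, the codewords at rank distance $t$ from $\mathbf r$ correspond bijectively to the elements $[N\ -\!D]\in\mathfrak{M}_n$ of $(0,k-1)$-weighted $q$-degree at most $t+k-1$ for which $D$ right-symbolically divides $N$; the message polynomial is the right quotient. The \emph{closest} codewords are therefore exactly those arising from module elements of \emph{minimum} weighted $q$-degree satisfying this divisibility. Since left-scaling $[N\ -\!D]\mapsto[\lambda N\ -\!\lambda D]$ preserves the quotient, restricting $c$ to be monic selects one representative per orbit without discarding any codeword. The \emph{while}-loop then terminates at the smallest $j$ for which a solution exists; by the previous step this is exactly the smallest weighted degree at which the divisibility can be realised, and because all closest codewords share the same weighted degree, they are discovered together in that single pass, producing the complete list. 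The main obstacle is the first step, where one must verify case-by-case that the algorithm's choice of update matrix simultaneously preserves distinct leading positions and minimises the weighted $q$-degrees; a secondary delicate point is the unique-decoding boundary, where the closest codeword corresponds to $b_1$ alone and one must check that the monic-$c$ parametrization, together with the range of $a$, still recovers it at the correct value of $j$.
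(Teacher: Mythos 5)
Your proposal follows essentially the same route as the paper: an induction showing each $B_i$ is a minimal Gr\"obner basis of $\mathfrak{M}_i$ (the degree bookkeeping you defer is exactly what the paper carries out case by case), followed by the predictable-leading-monomial argument that the $(a,c)$-parametrization exhausts the module elements of each weighted $q$-degree, combined with Theorem~\ref{thm:main}. You actually supply more detail than the paper on the second half, which the paper omits ``for the sake of brevity''; the boundary case you flag (a closest codeword arising from $b_1$ alone, i.e.\ $c=0$) cannot occur, since such elements have leading position $1$ and therefore violate the requirement $\qdeg(N)\leq \qdeg(D)+k-1$ of Theorem~\ref{thm:main}.
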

\begin{proof}
Denote by $M_1$ the matrix we multiply by on the left in the first IF statement and by $M_2$ the one in the ELSE statement of the algorithm. 
We know from Theorem \ref{thm:it_min_bas} that at each step $B_i$ is a basis for the interpolation module $\mathfrak M_i$ (swap the roles of $\Gamma_i$ and $\Delta_i$ where needed). We now show that it is a minimal Gr\"obner basis with respect to the $(0,k-1)$-weighted TOP monomial order via induction on $i$. Assume that at step $i$ the first row has leading position $1$ and the second row has leading position $2$, i.e.\ $\qdeg(P_i(x)) > \qdeg(K_i(x))+k-1$ and $\qdeg(N_i(x)) \leq \qdeg(D_i(x))+k-1$. Furthermore assume that $\qdeg(P_i(x))\geq \qdeg(N_i(x))$. If $\qdeg(P_i(x))\leq \qdeg(D_i(x))+k-1$ we multiply on the left by $M_1$. Hence, 
$$ \qdeg(P_{i+1}(x)) = \qdeg(P_i(x)) +1 ,$$
$$ \qdeg(K_{i+1}(x)) = \qdeg(K_i(x)) +1 ,$$ where the latter is less than $ \qdeg(P_{i}(x)) -k+2 =   \qdeg(P_{i+1}(x)) -k+1$ by the assumption. Thus, the leading position of the first row of $B_{i+1}$ is $1$. 
Moreover, 
$$ \qdeg(N_{i+1}(x)) \leq \max\{\qdeg(P_i(x)) , \qdeg(N_i(x))\}$$ $$=\qdeg(P_i(x)) \leq  \qdeg(D_i(x))+k-1$$
and, since the assumptions imply that $\qdeg(K_i(x)) < \qdeg(D_i(x))$,
{\small $$ \qdeg(D_{i+1}(x)) = \max\{\qdeg(K_i(x)) , \qdeg(D_i(x))\} = \qdeg(D_i(x)) .$$}
Thus the leading position of the second row is $2$. Moreover, $\qdeg(P_{i+1}(x))\geq \qdeg(N_{i+1}(x))$. Since the assumptions are true for $B_0$ the statement follows via induction. 

Analogously one can prove that multiplication with $M_2$ yields a basis of $\mathfrak M_i$ with different leading positions in the two rows. 
Thus, after $n$ steps, $B_n$ is a minimal Gr\"obner basis for the interpolation module $\mathfrak M(\bf r)$. Consequently,  $B_n$ has the so-called Predictable Leading Monomial Property, see~\cite{ku11} and~\cite{al11}. As a result of this property, the parametrization used for $a(x)$ and $c(x)$ in the second part of the algorithm will then yield all closest codewords. For the sake of brevity we omit the details.
\end{proof}
\begin{rem}
It can be verified that, due to the linear independence of $g_1, \ldots , g_k$, the first $k$ steps of the algorithm coincide up to a constant with the computation in Proposition~\ref{prop:Lagrec}. In other words, up to a constant, at step $k$ the algorithm has computed the $q$-annihilator polynomial and the $q$-Lagrange polynomial corresponding to the data so far.
\end{rem}
\begin{ex}\label{ex15}
Consider the Gabidulin code in $\F_{2^3}\cong \F_2[\alpha]$ (with $\alpha^3=\alpha +1$) with generator matrix 
\[G= \left( \begin{array}{ccc} 1 & \alpha & \alpha^2 \\ 1 & \alpha^2 & \alpha^4\end{array}\right) \]
(i.e.\ $g_1=1, g_2=\alpha, g_3= \alpha^2$ and $k=2$) and the received word $\mathbf{ r} =(\: \alpha^3 \;0 \; \alpha \:) $. We iteratively compute
$$B_1=\left[\begin{array}{cc}  x^2+x & 0 \\ (\alpha+1)x & x \end{array}\right] , $$
$$B_2=\left[\begin{array}{cc}  x^4+(\alpha^2 + \alpha+1)x^2+(\alpha^2 + \alpha)x & 0 \\ (\alpha^2+\alpha)x^2 +(\alpha^2 + \alpha+1)x & (\alpha^2+\alpha)x \end{array}\right] ,$$
$$B_3=\left[\begin{array}{cc}  \alpha^2 x^4+ \alpha^5 x & x\\ \alpha x^4 +\alpha^4 x^2 + x & \alpha x^2+\alpha^6 x \end{array}\right] .$$
$B_3$ is a minimal $(0,1)$-weighted Gr\"obner basis of the interpolation module. 
We get $\ell_1 =2$ and $ \ell_2= 2$, i.e.\ we want to use all $a(x)\in\mathcal{L}_2(x,2^3)$ with $2$-degree less than or equal to $0$ and all monic $c(x)\in\mathcal{L}_2(x,2^3)$  with $2$-degree equal to $0$. Thus, $a(x)= a_0 x$ for $a_0\in\F_{2^3}$ and $c(x)= x$. We get divisibility for $a_0\in \F_{2^3}\backslash \{ \alpha^6\}$. The corresponding message polynomials and codewords are
\[m_1(x) =  x^2 + \alpha x \quad, \quad c_1=(\: \alpha^3 \; 1 \; \alpha^3),\]
\[m_2(x) = \alpha^5 x^2 + \alpha^2 x \quad, \quad c_1=(\: \alpha^3  \; \alpha \; \alpha),\]
\[m_3(x) = \alpha^3 x^2 + \alpha^4 x \quad, \quad c_1=(\: \alpha^2 +1 \; 0 \; \alpha^2),\]
\[m_4(x) = \alpha^4 x^2  \quad,\quad c_3=(\: \alpha^2 +\alpha \; \alpha^2+1 \; \alpha),\]
\[m_5(x) = \alpha^6 x^2 + \alpha^6 x \quad, \quad c_1=(\: 0 \; \alpha^3 \; 1),\]
\[m_6(x) = \alpha^2 x^2 +  \alpha^3 x \quad,\quad c_2=(\: \alpha^5 \; 0 \; \alpha),\]
\[m_7(x) = \alpha x^2 +  x \quad,\quad c_2=(\: \alpha^3 \; 1 \; \alpha^3).\]
All these codewords are rank distance $1$ away from $\bf r$ (note that some of them are Hamming distance $1, 2$ or even $3$ away from $\bf r$).
\end{ex}

\section{Conclusion}\label{sec:conclusion}

In this paper we used a parametrization approach to the decoding of Gabidulin codes with respect to the rank metric. We presented a iterative algorithm with simple update steps that is similar to the ones found in the literature. Our main result is that we use this algorithm to compute a list of message polynomials that correspond to {\em all} codewords that are closest to a given received word. In our algorithm we construct, via a simple update matrix, a minimal Gr\"obner basis at each step. This setup allows for straightforward conclusions on minimality and parametrization due to the Predictable Leading Monomial Property, as in~\cite{al11} and~\cite{ku11}.




\bibliographystyle{plain}
\bibliography{margreta_anna-lena}

\end{document}